\documentclass[preprint,11pt, final, journal, letterpaper, twoside, onecolumn]{IEEEtran} 
\usepackage{mathptmx} 
\usepackage{times} 
\usepackage{amsmath} 
\usepackage{amssymb}  
\usepackage{enumerate}
\usepackage{mathrsfs}
\usepackage{psfrag}
\usepackage{amsbsy}
\usepackage{textcomp}
\usepackage{blkarray}
\usepackage{mathdots}
\usepackage[usenames,dvipsnames]{color}
\usepackage[dvips]{epsfig}
\usepackage{mdframed}
\usepackage{graphicx}
\usepackage{subcaption}
\usepackage{epsfig} 
\usepackage{algpseudocode,algorithm,algorithmicx}
\usepackage{tabularx}
\usepackage{multirow}
\usepackage{amsthm}


\newtheorem{theorem}{Theorem}[section]

\newtheorem{definition}[theorem]{Definition}
\newtheorem{corollary}[theorem]{Corollary}
\newtheorem{lemma}[theorem]{Lemma}

\newtheorem{example}[theorem]{Example}
\newtheorem{remark}[theorem]{Remark}
\newtheorem{assumption}[theorem]{Assumption}



%
%
%
%


\newcommand*\Let[2]{\State #1 $\gets$ #2}
\algrenewcommand\algorithmicrequire{\textbf{Precondition:}}
\algrenewcommand\algorithmicensure{\textbf{Postcondition:}}

\algnewcommand\Continue{\State\textbf{continue}}
\newcolumntype{Y}[1]{>{\hsize=#1\hsize}X}

\begin{document}
 \title{Impact of packet dropouts on the performance of optimal controllers and observers}
  \author{Amanpreet Singh Arora} \author{A. Sanand Amita Dilip \cormark[1]}
\author{Amanpreet Singh Arora, A. Sanand Amita Dilip \thanks{Amanpreet Singh Arora is with the Department of Aerospace Engineering, 
  Indian Institute of Technology Kharagpur, India. {\tt\small aps@iitkgp.ac.in}} \thanks{
   A. Sanand Amita Dilip is with the Department of Electrical Engineering, Indian Institute of Technology Kharagpur, India. 
  {\tt\small sanand@ee.iitkgp.ac.in}} 
 }
 \maketitle
\begin{abstract}
 We investigate  
the impact of packet dropouts due to non-idealities in communication networks on the performance of optimally derived controllers and observers in a minimax sense.
These packet dropouts are modeled by discrete constrained switching signals via directed graphs. 
We consider time optimal control and estimation, minimum energy and fuel optimization and LQR problems for systems subject to packet dropouts 
which turn out to be combinatorial; and compute 
algorithmically the worst case performance of these optimization problems. 
To reduce the computational burden in solving these combinatorial problems, a partial order is imposed on the switching signals. Validation of 
the proposed methods is done over a set of randomly generated systems. 
 Finally, by  
associating the worst case performance for each optimal control problem to the underlying network, one can optimize among the set of available communication 
networks. 
\end{abstract}

\section{Introduction}
%
With the rise of networked control and wireless communication in the last decade, classical models for control systems have been updated to incorporate 
the presence of non-idealities in the communication network 
(\cite{JungersKunduHeemels,schenato,mo,sin,heem,gom,paj,tab,hespanha,packetdropsiccps}
and the references therein). 
In many modern real life control systems, the plant and the 
controller are geographically distributed and communicate via wireless communication networks. Naturally, in real life situations, there is a possibility of a data loss 
due to packet dropouts in the communication network (\cite{JungersKunduHeemels,schenato,gom,cet,cet1}). Consequently, due to unreliable communication 
networks and loss of information, the control performance gets affected.  
Our goal is to quantify and compute the performance degradation due to the data loss constraint which allows us to measure the reliability of a 
communication network. 
To accommodate the data loss, we consider the following 
model proposed in \cite{JungersKunduHeemels}
\begin{eqnarray}\label{switch}
 \bold{x}(t+1) &=&\begin{cases} A\bold{x}(t)+B\bold{u}(t), \mbox{ if } \sigma(t)=1,\\
 A\bold{x}(t), \;\;\;\;\;\;\;\;\;\;\;\;\mbox{ if } \sigma(t)=0 \end{cases}\nonumber\\
 \bold{y}(t)&=&\begin{cases} C\bold{x}(t), \mbox{ if } \sigma(t)=1,\\
 \emptyset, \;\;\;\;\;\;\;\mbox{ if } \sigma(t)=0 \end{cases}
\end{eqnarray}
where $\sigma(t)=0$ when there is a packet dropout and $\sigma(t)=1$ when there is a successful transmission. We assume that there are certain communication 
constraints for example, no more than $k$ consecutive dropouts are allowed or during $n$ consecutive transmissions or there must be at least $m$ successful 
transmissions among every $n$ time instances and 
so on. These constraints are modeled using directed graphs (refer Preliminaries). Due to the presence of $\sigma$ in the model, 
instead of a linear time invariant system we have a switching system. 
A switching signal $\sigma$ is said to be admissible if it satisfies the given constraints. 
It is clear that for a fixed signal $\sigma$, the system model $(\ref{switch})$ is a linear time varying system model with $A(t)=A$ and $B(t)=B\sigma(t)$.  
For details about discrete linear systems under constrained switching, we refer the reader to \cite{Athanasopoulos2,Philippe} and the references therein. 
Schenato et al. \cite{schenato} consider probabilistic models to model the packet dropout phenomena. Kalman filter design for systems with data loss 
constraints is studied in \cite{mo,sin}. Constrained control with intermittent measurements was studied in \cite{rut} where the data loss is modeled 
using a language formed by a finite set alphabet. 
In \cite{Chakraborty}, optimal control problems due to feedback failure for continuous time LTI systems were studied. Recently, in \cite{savey}, intermittent 
feedback based control policy is proposed in the presence of bounded disturbance. 
Denial-of-service (DoS) attacks which destroy data availability via packet dropouts has also been studied intensively in the recent literature 
such as \cite{deper,feng,cet,cet1} and the references therein. 

We study some classical optimization problems in systems theory such as time optimal control and estimation and energy/fuel optimal control 
(\cite{Ath,liberzon}) for the model considered here and study the worst case performance over all admissible switching signals. 
Furthermore, we consider LQR optimization problems too for $(\ref{switch})$ and find the worst performance. We also consider the case where the designer 
finds an optimal LQR control law for an LTI system and then compute the performance degradation due to the data loss.  
To address these optimization problems, we use some results developed in \cite{packetdropsiccps} where  
the worst case minimum eigenvalue of the controllability Gramian was obtained over all admissible switching signals for the same model. 
To compute the worst minimum eigenvalue value of the controllability Gramian, an exact and an approximate approach was proposed in \cite{packetdropsiccps}. 
The exact approach involved a partial order on the admissible signals. Using this partial order, minimal admissible switching signals are constructed. 
We now extend this approach to tackle the optimization problems proposed in this paper  
and also leverage the well known optimization techniques of linear programming (LP) and semidefinite programming (SDP) 
(\cite{boyd,bentalnem}) in the proposed algorithms wherever applicable. We use the $T-$lift algorithm of \cite{packetdropsiccps} 
(details of $T-$lifts of graphs can be found in \cite{Philippe}) and the newly proposed 
algorithm in this paper for the generation of minimal signals and give a comparison of the two algorithms in solving the proposed combinatorial optimization problems. \\
{\bf Organization}: After stating preliminaries in the following section, we give a mathematical 
formulation of the optimization problems considered which are time optimal estimation and control, fuel and energy minimization, minimum reachability 
problem and LQR optimization in Section $\ref{pfm}$. In Section $\ref{topt}$, we give algorithms to solve time optimal control and estimation problems; whereas in Section $\ref{efopt}$, 
we solve fuel/energy optimization and minimum reachability problem exactly using minimal signals. Section $\ref{lqopt}$ involves LQR problems.
Section $\ref{heu}$ contains a new algorithm to generate 
minimal signals for a family of dropout signals where we 
reduce the amount of computations further. 
Validation study of the proposed methods is done in Section $\ref{study}$. \\
{\bf Notation}: Vectors are denoted by boldface letters, matrices by capital letters and scalars by small face letters. The symbol $\preceq$ denotes 
``less than or equal to'' in the given partial order. We use the commonly used short hand notation SDP and LP for semidefinite programs and linear 
programs respectively from the optimization literature. The set of real numbers and natural numbers is denoted by $\mathbb{R}$ and $\mathbb{N}$ respectively. 

\section{Preliminaries}\label{prel}
We first define some preliminaries to be used in the sequel. Many of these definitions are borrowed almost verbatim from \cite{packetdropsiccps}. 
\begin{definition}[\cite{packetdropsiccps}]\label{adm}
An automaton $\mathcal{A}$ is a directed labeled graph $G(V,E)$ with $N_V$ nodes in $V$ and $N_E$ edges in $E$. Each edge $(v,w,\sigma) \in E$ is labeled by $\sigma \in \{0,1\}$. A sequence $\sigma(0) \sigma(1),\ldots$ is \emph{admissible if there is a path in $G(V,E)$ carrying the sequence as the succession of labels 
   on its edges}. We denote by $\mathcal{L(A)}$ the set of all admissible switching sequences.
\end{definition}
\begin{example}[\cite{packetdropsiccps}]\label{ex1}
 Suppose no more than two consecutive dropouts are allowed. This is modeled using the following directed graph (refer Figure $1$). The set of admissible signals of a 
 fixed length $T$ can be found by enumerating all paths of length $T$ in the directed graph. 
 \begin{figure}[ht]\label{1drop}
\begin{center}
\includegraphics[scale=0.5]{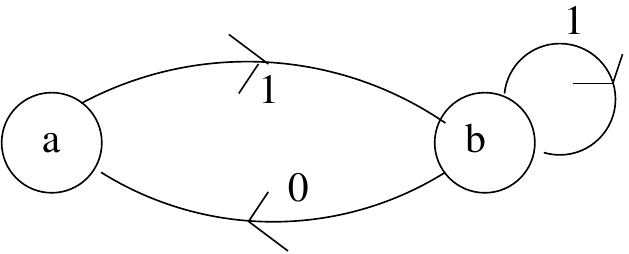}
\caption{{\scriptsize $G(V,E)$: No more than one consecutive dropout is allowed}}
\label{1dropnw}
\end{center}

\end{figure}
For example, when $T=4$, the set of admissible signals are $\{1111,1010,0101,0110,0111,1011\}$. 
\end{example}
We consider the system model $(\ref{switch})$  
where $\sigma(\cdot): \mathbb{N}\rightarrow \{ 0,1\}$ is a binary switching signal subject to constraints defined by an automaton which is a directed graph. 
The state evolution can be written as 
\begin{eqnarray}
  \bold{x}(t+1)=A^t\bold{x}(0)+C_{\sigma(t)}(A,B)\bar{\bold{u}}\label{lineq}
 \end{eqnarray}
 where  
\begin{eqnarray*}
 C_{\sigma(t)}(A,B):=
 \left[ \begin{array}{cccc} \sigma(0)A^{t}B  &\cdots  & \sigma(t-1)AB& \sigma(t)B\end{array}\right]
\end{eqnarray*}
is the controllability matrix associated with the switching sequence $\sigma(0)\sigma(1)...
\sigma(t)$ at time $t$ and $\bar{\bold{u}}:=\left[ \begin{array}{ccccc}  \bold{u}(0)^{\mathsf{T}}&\cdots& \bold{u}(t)^{\mathsf{T}}\end{array}\right]^{\mathsf{T}}$. One can define 
the observability matrix in a similar manner as 
\begin{eqnarray*}
 O_{\sigma(t)}(C,A):=
 \left[ \begin{array}{cccc} \sigma(0)C^{\mathsf{T}}  \sigma(1)(CA)^{\mathsf{T}}&\cdots  &  \sigma(t)(CA^t)^{\mathsf{T}}\end{array}\right]^{\mathsf{T}}.
\end{eqnarray*}
\begin{definition}[\cite{JungersKunduHeemels,packetdropsiccps}]\label{cntrl}
 The system $(\ref{switch})$ is reachable, if for any 
$\sigma \in \mathcal{L(A)}$, any final state $\bold{x}_f \in \mathbb{R}^n$ and for $\bold{x}_0=\bold{0}$, there exists an input signal $\bold{u}(\cdot) : \mathbb{N} \rightarrow \mathbb{R}^m$ 
such that $\bold{x}(t_{\sigma}) = \bold{x}_f$ for some $t_{\sigma} \in \mathbb{N}$. 
If, furthermore, the property holds for all $\bold{x}_0 \in \mathbb{R}^n$, we say that 
$(\ref{switch})$ is controllable and if $\bold{x}_0\in \mathbb{R}^n$ and $\bold{x}_f=\bold{0}$ in the above case, we say that $(\ref{switch})$ is $0-$controllable. 
\end{definition}
\begin{definition}\cite{JungersKunduHeemels}
 The system $(\ref{switch})$ is observable, if for any $\sigma \in \mathcal{L(A)}$, any pair of initial states $\bold{x}_0,\tilde{\bold{x}}_0\in \mathbb{R}^n$, 
 it holds that $\bold{y}(t,\bold{x}_0,\sigma)=\bold{y}(t,\tilde{\bold{x}}_0,\sigma)$ for all $t\in \mathbb{N}$ implies that $\bold{x}_0=\tilde{\bold{x}}_0$. 
 The system is called unobservable otherwise.
\end{definition}

Controllability and observability properties of $(\ref{switch})$ can be checked by the rank of corresponding controllability and observability matrices 
$C_{\sigma(t)}(A,B)$ and $ O_{\sigma(t)}(C,A)$ respectively over all admissible switching signals (\cite{JungersKunduHeemels}). It was shown by   
Jungers, Kundu and Heemels (\cite{JungersKunduHeemels}) that controllability and observability of these models can be decided in polynomial time 
by giving an explicit algorithm. 
We now define a partial order on the set of admissible signals which will be used to reduce the amount of computations required to solve the proposed 
optimization problems. 
\begin{definition}[\cite{packetdropsiccps}]\label{ord}
Given two switching signals $\sigma_1$ and $\sigma_2$, we write $\sigma_1 \preceq \sigma_2$ if for all $ i\in \mathbb{Z}$ for which 
$\sigma_1(i)=1$ it follows that $\sigma_2(i)=1$. 
\end{definition}
\begin{definition}[\cite{packetdropsiccps}]\label{irred}
 We say that a signal $\sigma$ is minimal, if there does not exist any other admissible signal $\bar{\sigma}$ $(\bar{\sigma}\neq \sigma)$ 
 such that $\bar{\sigma} \preceq \sigma$. We denote by $M_{\sigma(t)}$ the set of all minimal signals of length $t+1$ (since the time index starts from $0$). 
\end{definition}
\begin{example}
 In Example $\ref{ex1}$, let $\sigma_1=\{0101\}$ and $\sigma_2=\{0111\}$. Then, $\sigma_1\preceq \sigma_2$. The set of minimal signals for Example $\ref{ex1}$ 
 is $M_{\sigma(3)}=\{1010,0101,0110\}$.
\end{example}

We refer the reader to \cite{packetdropsiccps} (Algorithm $1$) for the algorithm to generate minimal signals of a fixed length. 
The minimal signals are constructed iteratively using the method of $T-$lift of graphs (\cite{packetdropsiccps}). 
\begin{assumption}\label{assum1}
 We assume that $A$ is invertible.
\end{assumption}
We refer the reader to \cite{JungersKunduHeemels} for the controllability and observability properties of $(\ref{switch})$ when $A$ is non invertible. For 
the sake of simplicity, we avoid this case by Assumption $\ref{assum1}$. 
\section{Problem formulation}\label{pfm}
We assume throughout that $(\ref{switch})$ is controllable and observable. 
We now mention the combinatorial problems to be considered.
\newcounter{problems}
\setcounter{problems}{1}
\begin{enumerate}
\item Problem \Roman{problems} (Worst time optimal estimation): 
Given a fixed time $T$, determine if it is possible to estimate $\bold{x}(0)$ in $T$ time steps for the control system $(\ref{switch})$. If yes, then 
find the worst optimal time required for the estimation of $\bold{x}(0)$ over all admissible switching signals i.e.,  
 \begin{eqnarray}
  &\mbox{max}_{\sigma\in\mathcal{L}(\mathcal{A})}\mbox{min}&t\nonumber\\
 &\mbox{subject to}& \mbox{rank}(O_{\sigma(t)}(C,A))=n.\label{west}
 \end{eqnarray}
\addtocounter{problems}{1}
 \item Problem \Roman{problems} (Worst time optimal control): Suppose that the control system $(\ref{switch})$ is $0-$controllable at time $T$ with input constraints $|u_i(t)|\le1$. 
 Find the worst optimal time required to drive the state to the origin from an arbitrary initial condition 
 $\bold{x}(0)\in \mathbb{R}^n$ over all admissible switching signals. 
 \begin{eqnarray}
&\mbox{max}_{\sigma\in\mathcal{L}(\mathcal{A})}\mbox{min}_{\bar{\bold{u}}}&t\nonumber\\
 &\mbox{subject to}& \bold{0}=A^{t}\bold{x}(0)+C_{\sigma(t)}(A,B)\bar{\bold{u}},\nonumber\\
 &&\bold{x}(0)\in\mathbb{R}^n, \;|u_i(t)|\le1.\label{wtime}
\end{eqnarray}
\addtocounter{problems}{1}
\item Problem \Roman{problems} (Worst fuel and energy optimal control): Consider the optimization problem 
\begin{eqnarray}
 &\mbox{max}_{\sigma\in\mathcal{L}(\mathcal{A})}\mbox{min}_{\bar{\bold{u}}}&\gamma_1\|\bar{\bold{u}}\|_1+\gamma_2\|\bar{\bold{u}}\|_2\nonumber\\
 &\mbox{subject to}& \bold{x}_{f}=C_{\sigma(t)}(A,B)\bar{\bold{u}}.\label{wminfuelen}
\end{eqnarray}
\addtocounter{problems}{1}
It is clear that when $\gamma_1=0,\gamma_2\neq0$, one obtains the max-min energy optimization problem and when 
$\gamma_1\neq0,\gamma_2=0$, one obtains the max-min fuel optimization problem. 
\item Problem \Roman{problems} (Minimum reachability): Suppose the available input energy is less than or equal to one. Let $\mathcal{P}_1$ be a polytope in the state space defined by 
conv$\{\bold{v}_1,\ldots,\bold{v}_p\}$. 
Verify if all points in the polytope are reachable from the origin under all switching 
signals with the unit energy constraint for the control system $(\ref{switch})$. 
\addtocounter{problems}{1}
 \item  Problem \Roman{problems} (Maxmin LQR): 
 Consider  
\begin{eqnarray}
 &\mbox{min}_{\bold{u}(t)}& \bold{x}^{\mathsf{T}}(T)Q_f\bold{x}(T)+\sum_{t=0}^{T-1}\bold{x}^{\mathsf{T}}(t)Q\bold{x}(t)+
 \bold{u}^{\mathsf{T}}(t)R\bold{u}(t)\nonumber\\
 &\mbox{subject to}& \bold{x}(t+1)=A\bold{x}(t)+B\sigma(t)\bold{u}(t),\nonumber\\
  &&\sigma \in \mathcal{L}(\mathcal{A}) .\label{lqr}
\end{eqnarray}
For a fixed $\sigma$, $\bold{x}(t+1)=A\bold{x}(t)+B\sigma(t)\bold{u}(t)$ forms a linear time varying system and the optimal cost is given by 
$\bold{x}^{\mathsf{T}}(0)P_{\sigma}(0)\bold{x}(0)$ where $P_{\sigma}(t)$ satisfies the difference Riccati equation 
$P_{\sigma}(t)=Q+A^{\mathsf{T}}P_{\sigma}(t+1)A-\sigma(t)A^{\mathsf{T}}P_{\sigma}(t+1)B(R+B^{\mathsf{T}}
  P_{\sigma}(t+1)B)^{-1}B^{\mathsf{T}}P_{\sigma}(t+1)$, $P_{\sigma}(T)=Q_f$ for each $\sigma\in \mathcal{L}(\mathcal{A})$. 
  We want to find the worst LQR energy which can be obtained by solving 
\begin{eqnarray}
 &\mbox{max}_{\sigma\in\mathcal{L}(\mathcal{A})}& \bold{x}^{\mathsf{T}}(0)P_{\sigma(0)}\bold{x}(0)\nonumber\\
 &\mbox{subject to}& P_{\sigma}(t)=Q+A^{\mathsf{T}}P_{\sigma}(t+1)A-\sigma(t)A^{\mathsf{T}}
  P_{\sigma}(t+1)B(R+B^{\mathsf{T}}P_{\sigma}(t+1)B)^{-1}B^{\mathsf{T}}P_{\sigma}(t+1),\nonumber\\
  &&P_{\sigma}(T)=Q_f.\label{wlqr}
\end{eqnarray}
\addtocounter{problems}{1}
\item Problem \Roman{problems} (Fixed input LQR): Consider the cost function 
\begin{equation}
 \mbox{min}_{\bold{u}(t)}\; \bold{x}^{\mathsf{T}}(T)Q_f\bold{x}(T)+\sum_{t=0}^{T-1}\bold{x}^{\mathsf{T}}(t)Q\bold{x}(t)+
 \bold{u}^{\mathsf{T}}(t)R\bold{u}(t)\nonumber
\end{equation}
which is the cost function for the finite horizon LQR problem. 
Let $\bold{u}(t)=K(t)\bold{x}(t)$ be the optimal state feedback law obtained for the LTI system $\bold{x}(t+1)=A\bold{x}(t)+B\bold{u}(t)$ without any dropout constraints  
where \emph{we assume that states are available for feedback and have been reconstructed 
ideally by the controller}. Therefore, the closed loop dynamics are $\bold{x}(t+1)=(A+BK(t))\bold{x}(t)$. In other words, the control law is obtained for 
an ideal system. 
Suppose there are packet dropouts in the closed loop system leading to $\bold{x}(t+1)=(A+\sigma(t)BK(t))\bold{x}(t)$ where $\sigma$ is a constrained 
switching signal. 
We want to find the worst degraded cost 
\begin{eqnarray}
 \mbox{max}_{\sigma} \{\bold{x}^{\mathsf{T}}(T)Q_f\bold{x}(T)+\sum_{t=0}^{T-1}\bold{x}^{\mathsf{T}}(t)(Q+
 K^{\mathsf{T}}(t)RK(t))\bold{x}(t)\}.\nonumber
\end{eqnarray}
\addtocounter{problems}{1}
\end{enumerate}
\begin{remark}
 The solutions of the optimization problems defined above (except Problem $IV$) can be used to assign performance degradation measures for the underlying communication network. 
If multiple communication networks are available, then one can optimize among these networks based on the performance measures depending on the type of
optimal control problems one wants to solve. In other words, the solutions to these problems can be used as a measure to select the optimal 
network.   
\end{remark}

\section{Time optimal control and estimation}\label{topt} 
In this section, we algorithmically solve the combinatorial optimization problems $I$ and $II$ given by Equations 
$(\ref{west})$ and $(\ref{wtime})$ in the previous section. 
\subsection{Time optimal estimation} 
We give the following lemma which allows us to find an algorithmic solution to Problem $I$. 
  \begin{lemma}
Suppose $(\ref{switch})$ is observable. 
Let $\sigma_1,\sigma_2\in \mathcal{L}(\mathcal{A})$.  
Let $t_{\sigma_1},t_{\sigma_2}$ be the time instance such that 
$O_{\sigma(t_{\sigma_1})}$ and $O_{\sigma(t_{\sigma_2})}$ become full column rank for the first time. 
If $\sigma_1\preceq \sigma_2$, then $t_{\sigma_1}\ge t_{\sigma_2}$. 
\end{lemma}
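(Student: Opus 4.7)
The plan is to establish the monotonicity of the rank of the observability matrix along the partial order $\preceq$, and then read off the time comparison from that.

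First I would inspect the structure of $O_{\sigma(t)}(C,A)$: its $i$-th block row is $\sigma(i)\,CA^{i}$. Thus the rows that contribute nontrivially to the row space are indexed exactly by those $i\in\{0,\dots,t\}$ for which $\sigma(i)=1$; the remaining block rows are zero and contribute nothing to the row space. So the row space of $O_{\sigma(t)}(C,A)$ equals the span of $\{CA^{i} : \sigma(i)=1,\; 0\le i\le t\}$.

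Next, using $\sigma_1\preceq \sigma_2$ (Definition \ref{ord}), every index $i$ for which $\sigma_1(i)=1$ satisfies $\sigma_2(i)=1$. Hence for every $t$, $\{CA^{i} : \sigma_1(i)=1,\;0\le i\le t\}\subseteq \{CA^{i}: \sigma_2(i)=1,\;0\le i\le t\}$, so the row space of $O_{\sigma_1(t)}(C,A)$ is contained in the row space of $O_{\sigma_2(t)}(C,A)$. In particular, $\mathrm{rank}(O_{\sigma_1(t)}(C,A))\le \mathrm{rank}(O_{\sigma_2(t)}(C,A))$ for all $t\in\mathbb{N}$.

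To finish, specialize this inequality at $t=t_{\sigma_1}$. By definition $\mathrm{rank}(O_{\sigma_1(t_{\sigma_1})}(C,A))=n$, and the monotonicity just established then forces $\mathrm{rank}(O_{\sigma_2(t_{\sigma_1})}(C,A))=n$ as well. Since $t_{\sigma_2}$ is the \emph{first} time that $O_{\sigma_2(\cdot)}$ attains full column rank, this yields $t_{\sigma_2}\le t_{\sigma_1}$, which is the desired conclusion.

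There is no real obstacle here; the only point to be careful about is the bookkeeping that zero block rows can be dropped without altering the row space, so that the rank depends only on the support of $\sigma$, making the partial order $\preceq$ translate directly into containment of row spaces and hence monotonicity of rank.
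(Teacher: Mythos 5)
Your proof is correct and is essentially the argument the paper intends: the paper's proof is the one-line remark that the claim ``follows from Definition of the partial order and the definition of the observability matrix,'' and your row-space containment plus rank-monotonicity argument is precisely the filled-in version of that. No issues.
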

\begin{proof}
 Follows from Definition $\ref{ord}$ and the definition of the observability matrix. 
\end{proof}

By the above lemma, it is enough to consider the rank of the observability matrices 
associated with the minimal signals instead of all admissible signals. Algorithm $1$  
 illustrates how to solve Problem $I$. 
\begin{algorithm}\label{alg:worstopttime1}
 \caption{Finding the worst optimal time for state estimation 
    }
    \hspace*{\algorithmicindent} \textbf{Input : $G(V,E), T,\{\bold{y}(0),\ldots,\bold{y}(T)\},A,C$}\\
    \hspace*{\algorithmicindent} \textbf{Output : Worst estimation time $t_{w}$}
    \begin{algorithmic}[1]
    \Let{$M_{\sigma(T)}$}{\mbox{set of minimal signals of length $T$}}
     \For{every $\sigma\in M_{\sigma(T)}$}
     \Let{$t_{\sigma}$}{\mbox{least $t\in \mathbb{N}, t\le T$ such that rank$(O_{\sigma(t)})=n$}}
     \If {rank$(O_{\sigma(T)})<n$}
     \Let{$t_{\sigma}$}{$\infty$}
     \EndIf
     \EndFor
     \Let{$t_w$}{\mbox{max}$_{\sigma\in M_{\sigma(T)}}\{t_{\sigma}\}$}
     \State \Return{$t_{w}$}
    \end{algorithmic}

\end{algorithm}
\begin{remark}
Notice that, if 
$t_w\le T$, then the estimation problem is feasible otherwise it is infeasible. 
It follows that the algorithm terminates since the number of minimal signals is finite and $T$ is also finite. 
\end{remark}

\subsection{Time optimal control}
To solve Problem $II$, 
we prove the following result to show that instead of all admissible signals, one only needs the set of minimal signals 
to find the solution. 
\begin{theorem}
 Suppose $(\ref{switch})$ is controllable and the constraint in Equation $(\ref{wtime})$ is feasible. 
 Let $\sigma_1, \sigma_2\in \mathcal{L}(\mathcal{A})$ such that $\sigma_1\preceq \sigma_2$ and let $t_{\sigma_1}, t_{\sigma_2}$ be the optimal time required to drive the state to 
 the origin from an initial condition $\bold{x}(0)$ for the two switching 
 signals $\sigma_1$ and $\sigma_2$ respectively. Then $t_{\sigma_1}\ge t_{\sigma_2}$. 
\end{theorem}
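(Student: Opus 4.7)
The plan is to show that at any time horizon $t$, the set of states that can be driven to the origin under $\sigma_2$ with the input constraint $\|\bar{\bold{u}}\|_\infty \le 1$ contains the corresponding set under $\sigma_1$. Once this inclusion is established, feasibility at time $t_{\sigma_1}$ for $\sigma_1$ immediately gives feasibility at time $t_{\sigma_1}$ for $\sigma_2$, and minimality over $t$ forces $t_{\sigma_2} \le t_{\sigma_1}$.

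The key step is an explicit construction. Suppose $\bar{\bold{u}}_1 = [\bold{u}_1(0)^{\mathsf{T}} \cdots \bold{u}_1(t)^{\mathsf{T}}]^{\mathsf{T}}$ is an optimal input with $\|\bar{\bold{u}}_1\|_\infty \le 1$ achieving
\begin{equation*}
\bold{0} = A^{t_{\sigma_1}}\bold{x}(0) + C_{\sigma_1(t_{\sigma_1})}(A,B)\bar{\bold{u}}_1.
\end{equation*}
Define, for the same horizon $t = t_{\sigma_1}$, a candidate input for $\sigma_2$ by $\bold{u}_2(i) := \sigma_1(i)\bold{u}_1(i)$ for $i = 0,\dots,t$. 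Then $\|\bar{\bold{u}}_2\|_\infty \le \|\bar{\bold{u}}_1\|_\infty \le 1$. The $i$-th block column of $C_{\sigma_2(t)}(A,B)\bar{\bold{u}}_2$ equals $\sigma_2(i)A^{t-i}B\,\sigma_1(i)\bold{u}_1(i)$; using $\sigma_1\preceq\sigma_2$ (Definition~\ref{ord}), whenever $\sigma_1(i)=1$ we have $\sigma_2(i)=1$, and whenever $\sigma_1(i)=0$ the block vanishes. Thus
\begin{equation*}
\sigma_2(i)\sigma_1(i) = \sigma_1(i) \quad \text{for all } i,
\end{equation*}
so $C_{\sigma_2(t_{\sigma_1})}(A,B)\bar{\bold{u}}_2 = C_{\sigma_1(t_{\sigma_1})}(A,B)\bar{\bold{u}}_1 = -A^{t_{\sigma_1}}\bold{x}(0)$, which shows that driving $\bold{x}(0)$ to the origin in $t_{\sigma_1}$ steps is feasible under $\sigma_2$ with admissible inputs.

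Since $t_{\sigma_2}$ is the least time at which such a feasible $\bar{\bold{u}}$ exists for $\sigma_2$, we conclude $t_{\sigma_2} \le t_{\sigma_1}$. I would also note briefly at the end that the restriction of the partial order to the prefix $[0, t_{\sigma_1}]$ is automatic from Definition~\ref{ord}, so admissibility of the prefix of $\sigma_2$ is not an issue.

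There is no real obstacle here; the only delicate point is that one must build $\bar{\bold{u}}_2$ from $\bar{\bold{u}}_1$ by zeroing out entries at time indices where $\sigma_1$ drops but $\sigma_2$ transmits, since blindly reusing $\bar{\bold{u}}_1$ would add spurious contributions through the extra transmission slots of $\sigma_2$. This zeroing both preserves the state trajectory and keeps the $\ell_\infty$ bound intact, which is exactly what the argument needs.
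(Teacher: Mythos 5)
Your proof is correct and follows essentially the same route as the paper: exhibit the optimal input for $\sigma_1$ as a feasible input for $\sigma_2$ at horizon $t_{\sigma_1}$ and invoke minimality of $t_{\sigma_2}$. The only difference is that the paper reuses $\bar{\bold{u}}_{\sigma_1}$ verbatim and asserts $-A^{t_{\sigma_1}}\bold{x}(0)=C_{\sigma_2(t_{\sigma_1})}(A,B)\bar{\bold{u}}_{\sigma_1}$, which is valid only if the input is (taken without loss of generality to be) zero at the slots where $\sigma_1$ drops but $\sigma_2$ transmits; your explicit construction $\bold{u}_2(i)=\sigma_1(i)\bold{u}_1(i)$, which also preserves the $\ell_\infty$ bound, makes this precise and is the more careful version of the same argument.
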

\begin{proof}
  From the hypothesis,
  \begin{equation}
   -A^{t_{\sigma_1}}\bold{x}(0)=C_{\sigma_1(t_{\sigma_1})}(A,B)\bold{u}_{\sigma_1}.\nonumber
  \end{equation}
 Since $\sigma_1\preceq \sigma_2$, it follows that 
 \begin{equation}
  -A^{t_{\sigma_1}}\bold{x}(0)=C_{\sigma_2(t_{\sigma_1})}(A,B)\bold{u}_{\sigma_1}.\nonumber
 \end{equation}
 Now by definition, $t_{\sigma_2}$ is the optimal time to reach $x_f$ from the 
 origin for the switching signal $\sigma_2$. Therefore, $t_{\sigma_1}\ge t_{\sigma_2}$. 
\end{proof}
It is clear from the above theorem that to solve the optimization problem given by Equation $(\ref{wtime})$, it is enough to consider the 
set of minimal signals (Definition $\ref{ord}$). 
Now assuming that $(\ref{switch})$ is $0-$controllable in time $T$ and feasibility of the constraint in Equation $(\ref{wtime})$, 
the following algorithm (Algorithm $2$) solves Problem $II$. 
\begin{algorithm}
 \caption{Finding the worst optimal time for a state transfer to the origin from an arbitrary initial condition 
    \label{alg:worstopttime}}
    \hspace*{\algorithmicindent} \textbf{Input : $G(V,E),A,B, \bold{x}_0,T$}\\
    \hspace*{\algorithmicindent} \textbf{Output : Worst optimal time $t_{w}$}
    \begin{algorithmic}[1]
    \Let{$M_{\sigma(T)}$}{\mbox{the set of minimal signals of length $T$}}
     \For{every $\sigma\in M_{\sigma(T)}$}
     \Let{$t_{\sigma}$}{\mbox{min$_{t\in \mathbb{N}}\{-A^{t_{\sigma}}\bold{x}_0=C_{\sigma(t_{\sigma})}(A,B)\bar{\bold{u}}, 
     \|\bar{\bold{u}}\|_{\infty}\le1\}$}}
     \EndFor
     \Let{$t_w$}{\mbox{max}$_{\sigma\in M_{\sigma(T)}}\{t_{\sigma}\}$}
     \State \Return{$t_{w}$}
    \end{algorithmic}

\end{algorithm}
\begin{remark}
 Notice that due to our feasibility assumption on $\bold{\bar{u}}$ and the finiteness of the minimal signals, Algorithm $2$ always terminates. Without 
 the feasibility assumption, we may have a situation where $\bold{\bar{u}}$ does not exists for a particular switching signal $\sigma$. Such cases can 
 be incorporated in Algorithm $2$ by making $t_{\sigma}=\infty$ to include infeasible cases.
\end{remark}


\section{Fuel and energy optimization and minimum reachability}\label{efopt}
\subsection{Fuel and energy optimization}
In the control literature, minimizing the $2-$norm of the input (or the $l_2-$norm) in Equation $(\ref{lineq})$ is identified with minimum energy problems whereas; minimizing the 
$1-$norm (or the $l_1-$norm) is identified with minimum fuel problems. 
We show in the following result that the search space can be reduced to the set of minimal signals to solve 
$(\ref{wminfuelen})$. 
\begin{theorem}
 Suppose $(\ref{switch})$ is controllable. Let $\sigma_1, \sigma_2 \in \mathcal{L}(\mathcal{A})$. Then, 
 \begin{eqnarray}
  &&\sigma_1\preceq \sigma_2\Rightarrow \|\bar{\bold{u}}_{\sigma_2}\|_1\le \|\bar{\bold{u}}_{\sigma_1}\|_1\mbox{ and}\nonumber\\
  &&\gamma_1\|\bar{\bold{u}}_{\sigma_2}\|_1
 +\gamma_2\|\bar{\bold{u}}_{\sigma_2}\|_2
 \le \gamma_1\|\bar{\bold{u}}_{\sigma_1}\|_1+\gamma_2\|\bar{\bold{u}}_{\sigma_1}\|_2.\nonumber
 \end{eqnarray}
  
\end{theorem}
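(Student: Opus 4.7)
The plan is to show that the optimal input for $\sigma_1$ is in fact a feasible (though generally suboptimal) input for $\sigma_2$, so that minimizing over a larger effective column set can only shrink the optimal cost. Both inequalities then follow simultaneously.

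First I would observe that the constraint $\bold{x}_f = C_{\sigma(t)}(A,B)\bar{\bold{u}}$ has the explicit form
\begin{equation*}
\bold{x}_f \;=\; \sum_{s=0}^{t} \sigma(s)\, A^{t-s} B\, \bold{u}(s),
\end{equation*}
so the entries $\bold{u}(s)$ at time indices with $\sigma(s)=0$ are multiplied by zero and therefore do not influence the constraint. This lets me assume without loss of generality that the minimizer $\bar{\bold{u}}_{\sigma_1}$ of $(\ref{wminfuelen})$ has $\bold{u}_{\sigma_1}(s)=\bold{0}$ whenever $\sigma_1(s)=0$: zeroing out those coordinates preserves feasibility and can only decrease both $\|\cdot\|_1$ and $\|\cdot\|_2$, hence also $\gamma_1\|\cdot\|_1+\gamma_2\|\cdot\|_2$. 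This WLOG reduction is the step I expect to require the most care to state cleanly, though it is essentially immediate.

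Next, invoking $\sigma_1\preceq\sigma_2$ (Definition $\ref{ord}$), every index $s$ with $\sigma_1(s)=1$ satisfies $\sigma_2(s)=1$. Combined with the WLOG reduction above, the same vector $\bar{\bold{u}}_{\sigma_1}$ satisfies
\begin{equation*}
C_{\sigma_2(t)}(A,B)\,\bar{\bold{u}}_{\sigma_1} \;=\; \sum_{s:\sigma_2(s)=1} A^{t-s} B\, \bold{u}_{\sigma_1}(s) \;=\; \sum_{s:\sigma_1(s)=1} A^{t-s} B\, \bold{u}_{\sigma_1}(s) \;=\; \bold{x}_f,
\end{equation*}
where the middle equality holds because $\bold{u}_{\sigma_1}(s)=\bold{0}$ on the additional indices $\{s : \sigma_2(s)=1,\sigma_1(s)=0\}$. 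Thus $\bar{\bold{u}}_{\sigma_1}$ is a feasible input for the minimization problem associated with $\sigma_2$.

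Finally, since $\bar{\bold{u}}_{\sigma_2}$ is by definition the optimizer under $\sigma_2$, feasibility of $\bar{\bold{u}}_{\sigma_1}$ for that problem yields
\begin{equation*}
\gamma_1\|\bar{\bold{u}}_{\sigma_2}\|_1 + \gamma_2\|\bar{\bold{u}}_{\sigma_2}\|_2 \;\le\; \gamma_1\|\bar{\bold{u}}_{\sigma_1}\|_1 + \gamma_2\|\bar{\bold{u}}_{\sigma_1}\|_2,
\end{equation*}
which is the second claimed inequality. Specializing to $\gamma_1=1,\gamma_2=0$ (i.e., running the same argument with the minimum-fuel objective alone) gives the first inequality $\|\bar{\bold{u}}_{\sigma_2}\|_1\le\|\bar{\bold{u}}_{\sigma_1}\|_1$. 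The main conceptual obstacle is just the WLOG step above; once that is in hand, the feasibility transfer between the two problems is a direct consequence of the partial order.
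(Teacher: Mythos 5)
Your proof is correct and follows essentially the same route as the paper's: the $\sigma_1$-optimal input is feasible for the $\sigma_2$-problem, so the $\sigma_2$-optimum can only be smaller. Your explicit WLOG step (zeroing the coordinates where $\sigma_1(s)=0$ so that $C_{\sigma_2(t)}(A,B)\bar{\bold{u}}_{\sigma_1}=\bold{x}_f$ actually holds) fills in a detail the paper simply asserts, which is a welcome clarification rather than a departure.
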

\begin{proof}
 Let $\bar{\bold{u}}_{\sigma_1}$ and $\bar{\bold{u}}_{\sigma_2}$ be the $1-$norm minimizing solutions of $(\ref{minfuel})$ for switching signals $\sigma_1$ and $\sigma_2$ respectively. 
 Since $\sigma_1\preceq \sigma_2$, $\bold{x}_f= C_{\sigma_2(t)}(A,B)\bar{\bold{u}}_{\sigma_1}$. Now because $\bar{\bold{u}}_{\sigma_2}$ is a norm minimizing 
 solution, $\|\bar{\bold{u}}_{\sigma_2}\|_1\le \|\bar{\bold{u}}_{\sigma_1}\|_1$. The other inequality follows similarly. 
\end{proof}

Consider the following optimization problems for a fixed switching signal $\sigma$ 
and fixed time $t$. 
\begin{eqnarray}
 &\mbox{min}&\|\bar{\bold{u}}\|_1\nonumber\\
 &\mbox{subject to}& \bold{x}_f=
 C_{\sigma(t)}(A,B)\bar{\bold{u}}\label{minfuel}\\
 &\mbox{min}&\gamma_1\|\bar{\bold{u}}\|_1+\gamma_2\|\bar{\bold{u}}\|_2\nonumber\\
 &\mbox{subject to}& \bold{x}_f=
 C_{\sigma(t)}(A,B)\bar{\bold{u}}.\label{minfuelen}
\end{eqnarray}
Note that the optimization problem given by $(\ref{minfuel})$ can be converted into an LP. 
The optimization problem $(\ref{minfuel})$
with input constraints $|u_i(j)|\le1$, $j=0,\ldots,t$ can be rewritten as (\cite{boydvan}, p.294)
\begin{eqnarray}
 &\mbox{min}_{\bold{n}}&\bold1^{\mathsf{T}}\bold n\nonumber\\
 &\mbox{subject to}& \bold{x}_f=
 C_{\sigma(t)}(A,B)\bar{\bold{u}},\;|u_i(j)|\le n_{ij},\nonumber\\
 && \;n_{ij} \le 1,\;i=1,\ldots,m, j=0,\ldots,t, \label{minfuelcons}
\end{eqnarray}
where $\bold{n}=\left[ \begin{array}{cccc} n_{10}&\cdots& n_{mt}\end{array}\right]^{\mathsf{T}}$ 
which is equivalent to  
\begin{eqnarray}
 &\mbox{min}_{\bold{n}}&\bold1^{\mathsf{T}}\bold n\nonumber\\
 &\mbox{subject to}& \bold{x}_f=
 C_{\sigma(t)}(A,B)\bar{\bold{u}}\nonumber\\
 &&\bar{\bold{u}}\le\bold{n},\;\bar{\bold{u}}\ge-\bold{n},\bold{n}\le\bold{1}. \label{minfuelcons1}
\end{eqnarray}
This is an LP. 
One needs to solve this LP over the set of minimal signals and find the solution with the maximum $1-$norm. 
Algorithm $\ref{alg:worstoptfuel}$ provides a method to find the worst optimal fuel input over all switching signals. 
\begin{algorithm}
 \caption{Finding the worst optimal fuel input (worst $\|.\|_1$ input) for a state transfer to the origin from an arbitrary initial condition 
    \label{alg:worstoptfuel}}
    \hspace*{\algorithmicindent} \textbf{Input : $G(V,E),A,B, \bold{x}_0,T$}\\
    \hspace*{\algorithmicindent} \textbf{Output : Worst optimal fuel input}
    \begin{algorithmic}[1]
    \Let{$M_{\sigma(T)}$}{\mbox{set of minimal signals of length $T$ }}
         \For{every $\sigma\in M_{\sigma(T)}$}
     \Let{$\bold{u}_{\sigma}$}{\mbox{$\bar{\bold{u}}$ such that $\bar{\bold{u}}$ is a solution of $(\ref{minfuelcons1})$}}
     \EndFor
     \Let{$\bold{u}_w$}{\mbox{max}$_{\sigma\in M_{\sigma(T)}}\{\bold{u}_{\sigma}\}$}
     \State \Return{$\bold{u}_{w}$}
    \end{algorithmic}

\end{algorithm}
\begin{remark}
 A similar algorithm can be constructed to find the worst case input of $(\ref{minfuelen})$. 
 If $\gamma_1=0$ in Problem $(\ref{minfuelen})$, then for a fixed $\sigma$, the problem becomes 
 an SDP. For $\gamma_1\neq0$, this is SOCP which can also be formulated as an SDP (\cite{boydvan}, p.310).  Thus, we need to solve SDPs over the set of minimal 
 signals to solve the optimization problem. 
\end{remark}

\subsection{Minimum reachability set with energy bounds}
We now give the following lemma to address Problem $IV$. 
\begin{lemma}
Suppose $(\ref{switch})$ is controllable. Let $\sigma_1, \sigma_2 \in \mathcal{L}(\mathcal{A})$ such that 
 $\sigma_1\preceq\sigma_2$. Then, 
 $\mathcal{E}_1(t):=\{\bold{x}\in\mathbb{R}^n\;|\; \bold{x}^{\mathsf{T}}W_{\sigma_1(t)}^{-1}\bold{x}\le1\}$ 
 $\subseteq$ $\mathcal{E}_2(t):=\{\bold{x}\in\mathbb{R}^n\;|\; \bold{x}^{\mathsf{T}}W_{\sigma_2(t)}^{-1}\bold{x}\le1\}$.  
\end{lemma}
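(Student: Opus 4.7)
The plan is to identify $W_{\sigma(t)}$ as the controllability Gramian
$W_{\sigma(t)} = C_{\sigma(t)}(A,B)\, C_{\sigma(t)}(A,B)^{\mathsf{T}}$,
so that $\mathcal{E}_i(t)$ is the ellipsoid of states reachable from the origin with unit input energy under the signal $\sigma_i$. The proof then reduces to showing that $W_{\sigma_1(t)} \preceq W_{\sigma_2(t)}$ in the positive semidefinite (Löwner) order, after which ellipsoid inclusion follows from the order-reversing behaviour of matrix inversion on positive definite matrices.

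First I would write the Gramian as a weighted sum of rank-one-block terms,
\begin{equation*}
W_{\sigma(t)} \;=\; \sum_{i=0}^{t} \sigma(i)\, A^{t-i} B B^{\mathsf{T}} (A^{t-i})^{\mathsf{T}},
\end{equation*}
which is immediate from the block structure of $C_{\sigma(t)}(A,B)$. The key observation is then the translation of the partial order: by Definition~\ref{ord}, $\sigma_1 \preceq \sigma_2$ is equivalent (for $\{0,1\}$-valued signals) to the pointwise inequality $\sigma_2(i) \ge \sigma_1(i)$ for all $i$. Consequently
\begin{equation*}
W_{\sigma_2(t)} - W_{\sigma_1(t)} \;=\; \sum_{i=0}^{t} \bigl(\sigma_2(i)-\sigma_1(i)\bigr)\, A^{t-i} B B^{\mathsf{T}} (A^{t-i})^{\mathsf{T}}
\end{equation*}
is a nonnegatively weighted sum of positive semidefinite matrices, hence positive semidefinite, so $W_{\sigma_1(t)} \preceq W_{\sigma_2(t)}$.

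Next, assuming both Gramians are invertible (which is the implicit setting, since $\mathcal{E}_i(t)$ is defined using $W_{\sigma_i(t)}^{-1}$; this holds for $t$ large enough by controllability together with the signal admissibility assumed in Problem IV), I would invoke the standard fact that $0 \prec W_{\sigma_1(t)} \preceq W_{\sigma_2(t)}$ implies $W_{\sigma_1(t)}^{-1} \succeq W_{\sigma_2(t)}^{-1}$. Then for any $\bold{x} \in \mathcal{E}_1(t)$,
\begin{equation*}
\bold{x}^{\mathsf{T}} W_{\sigma_2(t)}^{-1} \bold{x} \;\le\; \bold{x}^{\mathsf{T}} W_{\sigma_1(t)}^{-1} \bold{x} \;\le\; 1,
\end{equation*}
so $\bold{x} \in \mathcal{E}_2(t)$, yielding the claimed inclusion.

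There is no real obstacle in the argument; the only subtle point is the invertibility of both Gramians, which must be ensured by choosing $t$ so that both $\sigma_1$ and $\sigma_2$ render the system controllable up to time $t$. Since $\sigma_1 \preceq \sigma_2$ and controllability along $\sigma_1$ implies controllability along $\sigma_2$ (as the reachable subspace can only grow when more transmissions are active), it suffices that $\sigma_1$ itself be controllability-inducing at time $t$, and this is the natural regime for the minimum reachability problem. If one prefers to avoid the invertibility issue, the same proof applies with the inverses replaced by Moore--Penrose pseudoinverses on the respective reachable subspaces, but the clean statement above is the intended one.
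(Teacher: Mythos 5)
Your proof is correct and follows essentially the same route as the paper's: establish $W_{\sigma_1(t)}\le W_{\sigma_2(t)}$ in the L\"owner order, invert to get $W_{\sigma_1(t)}^{-1}\ge W_{\sigma_2(t)}^{-1}$, and compare the quadratic forms. You simply fill in the details the paper leaves implicit (the explicit sum decomposition of the Gramian and the invertibility caveat), which is a welcome but not substantively different elaboration.
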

\begin{proof}
 Note that if $\sigma_1\preceq\sigma_2$, then $W_{\sigma_1(t)}\le W_{\sigma_2(t)}$. Therefore, $W_{\sigma_1(t)}^{-1}\ge W_{\sigma_2(t)}^{-1}$. Thus, 
 $\bold{x}^{\mathsf{T}}W_{\sigma_1(t)}^{-1}\bold{x}\le1\Rightarrow \bold{x}^{\mathsf{T}}W_{\sigma_2(t)}^{-1}\bold{x}\le1$ i.e.,  $\mathcal{E}_1(t)\subseteq 
 \mathcal{E}_2(t)$.
\end{proof}
\begin{remark}
 It follows that if $\mathcal{P}_1\in \cap_{\sigma \in  M_{\sigma}(T)} \mathcal{E}_{\sigma}(T)$ where $M_{\sigma}(T)$ is the set of minimal signals, then $\mathcal{P}_1$ is reachable 
in time $T$ for an arbitrary packet dropout signal. An ellipsoid $\mathcal{E}_{\sigma}(T)$ contains $\mathcal{P}_1$ $\Leftrightarrow$ 
$\bold{v}_j^{\mathsf{T}}W_{\sigma(T)}^{-1}\bold{v}_j\le1$, $j=1,\ldots,p$ (\cite{boyd}). Thus, $\mathcal{P}_1$ is reachable under arbitrary constrained 
switching signal if and only if $\bold{v}_j^{\mathsf{T}}W_{\sigma(T)}^{-1}\bold{v}_j\le1$, $j=1,\ldots,p$ for all $\sigma \in M_{\sigma(T)}$.
\end{remark}

\section{LQR Problems}\label{lqopt}
In this section, we consider two types of problems namely the max-min LQR problem and the worst case LQR for the fixed input. 
\subsection{Maxmin LQR}
For a fixed switching signal $\sigma$, the LQR problem given by Equation $(\ref{lqr})$ is exactly the same as 
the classical linear time varying systems problem with $A(t)=A,B(t)=B\sigma(t)$. 
We get Riccati equations with time varying coefficients corresponding to $B\sigma(t)$ whose solution determines the feedback law. We use the idea of the minimal signals to find the worst energy over all switching signals. 
The worst case LQR measure can be used for closed loop control problems which forms a counterpart of the open loop control using the controllability Gramian. 
\begin{theorem}
 Let $\sigma_1\preceq \sigma_2$ and $P_{\sigma_1}(t),P_{\sigma_2}(t)$ be the matrices for $\sigma_1,\sigma_2$ satisfying the difference Riccati equations 
\begin{eqnarray}
  P_{\sigma_1}(t)=Q+A^{\mathsf{T}}P_{\sigma_1}(t+1)A-
  \sigma_1(t)A^{\mathsf{T}}P_{\sigma_1}(t+1)B(R+B^{\mathsf{T}}P_{\sigma_1}(t+1)B)^{-1}B^{\mathsf{T}}P_{\sigma_1}(t+1)\label{dric1}\\
  P_{\sigma_2}(t)=Q+A^{\mathsf{T}}P_{\sigma_2}(t+1)A-
  \sigma_2(t)A^{\mathsf{T}}P_{\sigma_2}(t+1)B(R+B^{\mathsf{T}}P_{\sigma_2}(t+1)B)^{-1}B^{\mathsf{T}}P_{\sigma_2}(t+1)\label{dric2}
 \end{eqnarray}
 with the boundary condition $P_{\sigma_1}(t_f)=P_{\sigma_2}(t_f)=Q_f>0$. Then, $P_{\sigma_1}(0)\ge P_{\sigma_2}(0)$. 
\end{theorem}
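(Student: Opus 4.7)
My plan is to prove the statement by backward induction on $t$, running from $t_f$ down to $0$. The base case $t = t_f$ is immediate from the boundary condition $P_{\sigma_1}(t_f) = P_{\sigma_2}(t_f) = Q_f$. For the inductive step I will exploit two monotonicity properties of the one-step Riccati update
\[
\mathcal{R}_s(P) := Q + A^{\mathsf{T}} P A - s\, A^{\mathsf{T}} P B \bigl(R + B^{\mathsf{T}} P B\bigr)^{-1} B^{\mathsf{T}} P, \qquad s\in\{0,1\}.
\]

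The first key lemma I would establish is that $\mathcal{R}_s$ is monotone in its matrix argument: if $P \ge P' \ge 0$, then $\mathcal{R}_s(P) \ge \mathcal{R}_s(P')$. This is classical for the standard LQR Riccati recursion (the case $s=1$); the cleanest way to see it is via the variational identity
\[
\bold{x}^{\mathsf{T}} \mathcal{R}_s(P)\bold{x} \;=\; \min_{\bold{u}} \Bigl\{ \bold{x}^{\mathsf{T}} Q \bold{x} + \bold{u}^{\mathsf{T}} R \bold{u} + (A\bold{x} + sB\bold{u})^{\mathsf{T}} P (A\bold{x} + sB\bold{u}) \Bigr\},
\]
from which monotonicity in $P$ follows since the minimand is pointwise monotone in $P$.

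The second key lemma, which is specific to our dropout setting, is that $\mathcal{R}_0(P) \ge \mathcal{R}_1(P)$ for every $P\ge0$. This is essentially free, since
\[
\mathcal{R}_0(P) - \mathcal{R}_1(P) \;=\; A^{\mathsf{T}} P B \bigl(R + B^{\mathsf{T}} P B\bigr)^{-1} B^{\mathsf{T}} P A \;\ge\; 0
\]
because $R + B^{\mathsf{T}} P B \succ 0$. Intuitively this says that losing the ability to actuate at time $t$ can only increase the cost-to-go.

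With these two lemmas in hand, the induction step is immediate. Suppose $P_{\sigma_1}(t+1) \ge P_{\sigma_2}(t+1)$. Since $\sigma_1 \preceq \sigma_2$, at each time instant either $\sigma_1(t) = \sigma_2(t)$ or $\sigma_1(t) = 0$ and $\sigma_2(t) = 1$. In the first case, applying the first lemma gives $\mathcal{R}_{\sigma_1(t)}(P_{\sigma_1}(t+1)) \ge \mathcal{R}_{\sigma_2(t)}(P_{\sigma_2}(t+1))$. In the second case, I chain both lemmas: $\mathcal{R}_0(P_{\sigma_1}(t+1)) \ge \mathcal{R}_0(P_{\sigma_2}(t+1)) \ge \mathcal{R}_1(P_{\sigma_2}(t+1))$. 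Either way, $P_{\sigma_1}(t) \ge P_{\sigma_2}(t)$, completing the induction and yielding $P_{\sigma_1}(0) \ge P_{\sigma_2}(0)$ at $t=0$. The main obstacle is the first lemma, monotonicity of the Riccati operator in $P$; a purely algebraic derivation via the matrix inversion lemma is possible but fiddly, so I would prefer the variational identity above, which reduces monotonicity to an obvious pointwise comparison of quadratics and thereby keeps the argument short and transparent.
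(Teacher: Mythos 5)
Your proof is correct, and it follows the same backward-induction skeleton as the paper: the paper also works backwards from $P_{\sigma_1}(t_f)=P_{\sigma_2}(t_f)=Q_f$ and, at the first step, uses exactly your second lemma (the subtracted term $A^{\mathsf{T}}Q_fB(R+B^{\mathsf{T}}Q_fB)^{-1}B^{\mathsf{T}}Q_fA\ge0$ is present only when $\sigma(t)=1$). Where you genuinely go beyond the paper is in your first lemma. The paper's one-step comparison works at $t_f-1$ only because both recursions are evaluated at the \emph{same} matrix $Q_f$; from $t_f-2$ onward the two recursions carry different arguments $P_{\sigma_1}(t+1)\ge P_{\sigma_2}(t+1)$, and the paper's phrase ``by similar arguments, going backwards in time'' silently assumes that the one-step Riccati map is monotone in its matrix argument. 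You state and prove this monotonicity explicitly via the variational identity $\bold{x}^{\mathsf{T}}\mathcal{R}_s(P)\bold{x}=\min_{\bold{u}}\{\bold{x}^{\mathsf{T}}Q\bold{x}+\bold{u}^{\mathsf{T}}R\bold{u}+(A\bold{x}+sB\bold{u})^{\mathsf{T}}P(A\bold{x}+sB\bold{u})\}$, which reduces the claim to a pointwise comparison of quadratics; this is the cleanest route and avoids the matrix-inversion-lemma algebra. In short, your argument is the rigorous version of the paper's sketch: same strategy, but with the genuinely necessary monotonicity lemma made explicit and proved, plus the correct symmetric form $A^{\mathsf{T}}PB(R+B^{\mathsf{T}}PB)^{-1}B^{\mathsf{T}}PA$ of the subtracted term (the paper's displayed Riccati equations drop the trailing $A$, evidently a typo).
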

\begin{proof}
 Let $P_{\sigma_1}(t)$ and $P_{\sigma_2}(t)$ be the corresponding matrices satisfying the two difference Riccati equations. The total cost is 
 $\bold{x}_0^{\mathsf{T}}P_{\sigma_1}(0)\bold{x}_0$ and $\bold{x}_0^{\mathsf{T}}P_{\sigma_1}(0)\bold{x}_0$ respectively and $P_{\sigma_1}(t_f)=P_{\sigma_2}(t_f)=Q_f>0$. 
 Let $\sigma_1\preceq \sigma_2$, then 
 \begin{eqnarray}
  P_{\sigma_1}(t_f-1)&=&Q+A^{\mathsf{T}}Q_fA-
  \sigma_1(t_f)A^{\mathsf{T}}Q_fB(R+B^{\mathsf{T}}Q_fB)^{-1}B^{\mathsf{T}}Q_f\label{dric3}\\
  P_{\sigma_2}(t_f-1)&=&Q+A^{\mathsf{T}}Q_fA-
  \sigma_2(t_f)A^{\mathsf{T}}Q_fB(R+B^{\mathsf{T}}Q_fB)^{-1}B^{\mathsf{T}}Q_f.\label{dric4}
 \end{eqnarray}
 Observe that since $A^{\mathsf{T}}Q_fB(R+B^{\mathsf{T}}Q_fB)^{-1}B^{\mathsf{T}}Q_f\ge 0$, and $\sigma_1\preceq \sigma_2$, it is clear that 
 $P_{\sigma_1}(t_f-1)\ge P_{\sigma_2}(t_f-1)$. By similar arguments, going backwards in time, 
 $P_{\sigma_1}(0)\ge P_{\sigma_2}(0)$. 
\end{proof}

\subsection{Fixed input LQR}
As mentioned in the Problem $VI$, let $\bold{u}=K\bold{x}(t)$ be the optimal input obtained for the LTI system 
$\bold{x}(t+1)=A\bold{x}(t)+B\bold{u}(t)$ for the classical finite horizon LQR problem. We 
now show that the worst case performance degradation can be computed using the partial order relation. 
\begin{theorem}
 Let $J_{\sigma_1}$ and $J_{\sigma_2}$ be the LQR cost associated with switching signals $\sigma_1$ and $\sigma_2$ when $\bold{u}=K(t)\bold{x}(t)$ is used 
 as a stabilizing state feedback law. Suppose $A$ is unstable but $(A,B)$ is stabilizable. 
 If $\sigma_1 \preceq \sigma_2$, then $J_{\sigma_2}\preceq J_{\sigma_1}$. 
\end{theorem}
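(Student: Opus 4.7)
The plan is to establish the matrix inequality $P_{\sigma_1}(0) \succeq P_{\sigma_2}(0)$ in the positive semidefinite order, where $P_\sigma(t)$ denotes the closed-loop cost-to-go matrix satisfying the Lyapunov-type recursion
\[
P_\sigma(t) = Q + K(t)^{\mathsf{T}} R K(t) + (A + \sigma(t) B K(t))^{\mathsf{T}} P_\sigma(t+1) (A + \sigma(t) B K(t)), \qquad P_\sigma(T) = Q_f.
\]
Since $J_\sigma = \mathbf{x}^{\mathsf{T}}(0) P_\sigma(0) \mathbf{x}(0)$, the scalar inequality $J_{\sigma_2} \leq J_{\sigma_1}$ follows immediately from $P_{\sigma_1}(0) - P_{\sigma_2}(0) \succeq 0$.

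First I would reduce to the case where $\sigma_1$ and $\sigma_2$ differ at exactly one time index $t^{\star}$ with $\sigma_1(t^{\star})=0$ and $\sigma_2(t^{\star})=1$; the general claim then follows by chaining such single-flip comparisons through transitivity of the PSD order. A backward induction on $t$ then handles the cases $t \neq t^{\star}$ almost trivially: for $t > t^{\star}$ the signals coincide and $P_{\sigma_1}(t) = P_{\sigma_2}(t)$, while for $t < t^{\star}$ the common value of $\sigma(t)$ yields
\[
P_{\sigma_1}(t) - P_{\sigma_2}(t) = (A + \sigma(t) B K(t))^{\mathsf{T}} \bigl(P_{\sigma_1}(t+1) - P_{\sigma_2}(t+1)\bigr)(A + \sigma(t) B K(t)),
\]
which is PSD by the inductive hypothesis and congruence invariance of the PSD cone.

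The core of the argument lies in the single step at $t = t^{\star}$. Writing $\tilde P := P_{\sigma_1}(t^{\star}+1) = P_{\sigma_2}(t^{\star}+1)$, I must show
\[
A^{\mathsf{T}} \tilde P A \;-\; (A + B K(t^{\star}))^{\mathsf{T}} \tilde P (A + B K(t^{\star})) \;\succeq\; 0.
\]
My plan here is to exploit the LTI LQR optimality of $K(t^{\star})$: it is the unique minimizer of $K \mapsto (A+BK)^{\mathsf{T}} P^{*}(t^{\star}+1)(A+BK) + K^{\mathsf{T}} R K$, which yields the classical Riccati residual identity
\[
A^{\mathsf{T}} P^{*}(t^{\star}+1) A - (A + B K(t^{\star}))^{\mathsf{T}} P^{*}(t^{\star}+1)(A+B K(t^{\star})) = K(t^{\star})^{\mathsf{T}} R K(t^{\star}) + A^{\mathsf{T}} P^{*}(t^{\star}+1) B (R + B^{\mathsf{T}} P^{*}(t^{\star}+1) B)^{-1} B^{\mathsf{T}} P^{*}(t^{\star}+1) A \succeq 0,
\]
and combine this with the auxiliary claim $\tilde P \succeq P^{*}(t^{\star}+1)$, itself provable by a short companion backward induction that begins from $P_\sigma(T) = P^{*}(T) = Q_f$.

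The main obstacle I anticipate is precisely the transfer of the Riccati identity from $P^{*}(t^{\star}+1)$ to the actual suffix cost-to-go $\tilde P$: the map $P \mapsto A^{\mathsf{T}} P A - (A+B K(t^{\star}))^{\mathsf{T}} P (A+B K(t^{\star}))$ is linear in $P$ but is \emph{not} monotone with respect to the PSD order in general, so one cannot directly substitute $\tilde P$ for $P^{*}(t^{\star}+1)$. Closing this gap is where the hypothesis that $A$ is unstable while $(A,B)$ is stabilizable --- ensuring $K(t^{\star})$ is in a genuine sense contractive relative to $A$ --- must be used substantively, together with a careful decomposition $\tilde P = P^{*}(t^{\star}+1) + \Delta$ and a quantitative bound on the quadratic perturbation induced by $\Delta \succeq 0$ in terms of the Riccati residual.
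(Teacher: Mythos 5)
Your reduction to single-bit flips, the backward induction, and the congruence argument for $t \neq t^{\star}$ are all sound, and your auxiliary claim $\tilde P \succeq P^{*}(t^{\star}+1)$ does follow by the companion induction you sketch. But the proof is not complete: the whole theorem rests on the one-step inequality $A^{\mathsf{T}}\tilde P A \succeq (A+BK(t^{\star}))^{\mathsf{T}}\tilde P(A+BK(t^{\star}))$, and you correctly observe that the Riccati residual identity only gives this with $P^{*}(t^{\star}+1)$ in place of $\tilde P$, that the map $P\mapsto A^{\mathsf{T}}PA-(A+BK)^{\mathsf{T}}P(A+BK)$ is not monotone in $P$, and then you stop --- the ``careful decomposition and quantitative bound'' is announced but never produced. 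This is a genuine gap, not a routine detail: for an arbitrary PSD matrix $\tilde P$ the inequality is simply false (take $\tilde P = vv^{\mathsf{T}}$ with $v$ orthogonal to $Ax$ but not to $(A+BK)x$ for some $x$, which is possible whenever $BK(t^{\star})x\notin\mathrm{span}\{Ax\}$). Writing $\tilde P=P^{*}+\Delta$ with $\Delta\succeq 0$, the quantity you need becomes the (PSD) Riccati residual plus $A^{\mathsf{T}}\Delta A-(A+BK)^{\mathsf{T}}\Delta(A+BK)$, and that second term can be indefinite with negative part comparable in size to $\Delta$, which the residual does not control. So any completion must use special structure of the dropout-corrupted suffix cost-to-go, and it is not evident that such structure suffices; the plan as written does not constitute a proof.

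For comparison, the paper's own argument takes a cruder route: it expands the cost along the closed-loop trajectory and asserts that when $\sigma(t)=1$ the matrix $A+BK(t)$ is Schur and therefore ``acts as a contraction,'' yielding stage-wise domination of each quadratic term. That step is also not rigorous --- Schur stability does not imply $\|(A+BK(t))x\|_{Q}\le\|Ax\|_{Q}$ for a fixed weight $Q$, which is exactly the non-monotonicity issue you isolated, appearing in a different metric. Your proposal is structurally cleaner and localizes the true difficulty (and it does close in the scalar case, where $|a+bk(t)|\le|a|$ holds pointwise along the Riccati recursion), but neither your argument nor the paper's actually bridges the matrix-valued one-step comparison; to finish, you would need either an additional structural property of $\tilde P$ relative to the gain $K(t^{\star})$ or a genuinely different global argument.
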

\begin{proof}
 Observe that the cost is given by 
 \begin{eqnarray}
  \bold{x}^{\mathsf{T}}(T)Q_f\bold{x}(T)+\sum_{t=0}^{T-1}\bold{x}^{\mathsf{T}}(t)(Q+
 K^{\mathsf{T}}(t)RK(t))\bold{x}(t)\nonumber\\
 =\bold{x}^{\mathsf{T}}(T)Q_f\bold{x}(T)+\bold{x}^{\mathsf{T}}(0)\Bigg(\sum_{t=0}^{T-1}((A+\sigma(t)BK(t))^{\mathsf{T}})^{t}
 (Q+ K^{\mathsf{T}}(t)RK(t))(A+\sigma(t)BK(t))^{t}\Bigg)\bold{x}(0).\nonumber
 \end{eqnarray}
 When $\sigma(t)=1$, $(A+BK(t))$ is a Schur 
 matrix which implies that it acts as a contraction map on the state space. Therefore, 
 $\bold{x}^{\mathsf{T}}(0)((A+BK(t))^{\mathsf{T}})^{t}Q(A+BK(t))^t\bold{x}(0)
 \le \bold{x}^{\mathsf{T}}(0)(A^{\mathsf{T}})^tQA^t\bold{x}(0)$. Therefore, $\sigma_1 \preceq \sigma_2 \Rightarrow J_{\sigma_2}\preceq J_{\sigma_1}$. 
\end{proof}


\section{Minimal Signal Generation}\label{heu} 
%
In \cite{packetdropsiccps}, the $T-$lift algorithm for generating minimal signals of switching signals modeled by an automaton was proposed 
using ideas introduced in \cite{Philippe}. While the $T-$lift algorithm works well for the case of general automaton, for the case of signals with at most $k$ consecutive dropouts, a faster algorithm can be formulated. We now propose this faster algorithm and justify why it works.
\begin{definition}[Admissible Flipping]
A flipping of a bit of a binary switching signal (i.e., changing a bit from $1$ to $0$ or vice versa)
is said to be an admissible flipping if and only if the flipping results in an admissible signal. Moreover, the direction of the flip can also be specified 
while describing admissibility, for example, an Admissible Flip $0\rightarrow 1$ means flipping of a $0$ bit to $1$ is admissible and vice versa for an Admissible Flip $1\rightarrow 0$.
\end{definition}
\begin{definition}[Surrounding Bits]
If a $1$ bit of a binary switching signal is preceded by $p$ and succeeded by $q$ consecutive $0$ bits then such a $1$ bit is said to be surrounded by $p+q$ number of $0$ bits. The same definition applies for a $0$ bit surrounded by $1$ bits. 
\end{definition}
\begin{theorem}\label{flip_theo}
A switching signal is minimal if and only if it is admissible and no combination of $1\rightarrow 0$ flippings is admissible
\end{theorem}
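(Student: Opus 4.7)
The plan is to show that the theorem is essentially a restatement of Definition \ref{irred} once we decode the partial order $\preceq$ in terms of bit flips. The key observation is that, by Definition \ref{ord}, $\bar\sigma \preceq \sigma$ holds exactly when at every index $i$ with $\bar\sigma(i)=1$ we also have $\sigma(i)=1$; equivalently, $\bar\sigma$ is obtained from $\sigma$ by (possibly) flipping some of the $1$ bits of $\sigma$ to $0$, leaving the remaining bits untouched. In this language, the set of admissible $\bar\sigma$ with $\bar\sigma \preceq \sigma$ and $\bar\sigma \neq \sigma$ is exactly the set of signals produced by a nonempty combination of $1 \rightarrow 0$ flippings of $\sigma$ that remain admissible.

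With this reformulation in hand, I would split the biconditional into the two obvious implications. For the forward direction, assume $\sigma$ is minimal. Then $\sigma$ is admissible by definition of minimality (minimality is defined only among admissible signals via $\mathcal{L}(\mathcal{A})$), and no admissible $\bar\sigma \neq \sigma$ satisfies $\bar\sigma \preceq \sigma$. By the decoding above, this is exactly the statement that no nonempty combination of $1 \rightarrow 0$ flippings of $\sigma$ yields an admissible signal. For the converse, assume $\sigma$ is admissible and no combination of $1 \rightarrow 0$ flips on $\sigma$ is admissible. If some $\bar\sigma \preceq \sigma$ with $\bar\sigma \neq \sigma$ were admissible, it would correspond to a nonempty combination of $1 \rightarrow 0$ flips, contradicting the hypothesis; hence no such $\bar\sigma$ exists and $\sigma$ is minimal.

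The argument is almost entirely definitional, so the only place a subtle point could sneak in is in the phrase ``combination of $1 \rightarrow 0$ flippings.'' I would make explicit that a ``combination'' includes single-bit flips as a special case, but excludes the trivial (empty) combination that leaves $\sigma$ unchanged; otherwise the forward direction would be vacuously false. No machinery beyond Definitions \ref{ord}--\ref{irred} and the newly introduced notion of admissible flipping is needed, and there is no real obstacle — the main care is in phrasing so the correspondence between ``$\bar\sigma \preceq \sigma$, $\bar\sigma \neq \sigma$'' and ``nonempty combination of $1\rightarrow 0$ flips'' is stated cleanly before invoking it in both directions.
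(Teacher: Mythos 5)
Your argument is correct and is essentially the same as the paper's own proof: both identify the strictly weaker signals under $\preceq$ with the signals obtained by nonempty combinations of $1\rightarrow 0$ flips and then invoke Definition \ref{irred}. Your version is slightly more careful in flagging that the empty combination must be excluded, but the underlying reasoning is identical.
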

\begin{proof}
For a switching signal, a weaker signal in the sense of partial order defined earlier can only be constructed by $1\rightarrow 0$ flippings. Since, by definition, a signal is minimal if and only if there does not exist a weaker admissible signal, the theorem follows.
\end{proof}
\begin{corollary}
For the constraint of at most $k$ consecutive dropouts, a signal is minimal if and only if every $1$ bit of the signal is surrounded by at least $k$ number of 
$0$ bits.
\end{corollary}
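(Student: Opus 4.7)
The plan is to apply Theorem \ref{flip_theo} and translate its abstract admissibility condition into the concrete combinatorial condition on surrounding zeros. By Theorem \ref{flip_theo}, a signal is minimal if and only if it is admissible and no combination of $1 \to 0$ flips yields another admissible signal; under the ``at most $k$ consecutive dropouts'' constraint, admissibility of a binary string means precisely that no run of consecutive $0$s has length exceeding $k$. So the goal is to show that the ``no admissible $1\to 0$ flip combination'' condition is equivalent to the surrounding-zeros condition.

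For the forward direction, I would assume every $1$ bit is surrounded by at least $k$ zeros (i.e., $p + q \geq k$ where $p$ and $q$ denote the lengths of the maximal runs of $0$s immediately preceding and succeeding that $1$, with the convention that the string's start or end caps a run). Pick any single $1$ bit and consider flipping it to $0$: the new run containing its position has length $p + q + 1 \geq k + 1$, violating the constraint. Hence every single $1\to 0$ flip is inadmissible. To upgrade this to arbitrary \emph{combinations} of such flips, I would note the monotonicity observation: applying further $1\to 0$ flips can only merge adjacent runs of $0$s and can never shorten any existing run, so the offending run of length $\geq k+1$ persists (or grows) under any larger flipping set. Therefore no combination of $1\to 0$ flips is admissible, and the signal is minimal by Theorem \ref{flip_theo}.

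For the converse, assume the signal is minimal. Then in particular no single $1\to 0$ flip is admissible (single flips are a special case of combinations). Fix any $1$ bit with surrounding zero counts $p$ and $q$; flipping it creates a run of length $p + q + 1$ at that location, which must violate the constraint, so $p + q + 1 > k$, i.e., $p + q \geq k$. Since the chosen $1$ bit was arbitrary, every $1$ in the signal is surrounded by at least $k$ zeros.

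The only delicate point I expect is the boundary handling (a $1$ near the start or end of the length-$T$ string), but this is absorbed cleanly by letting $p$ (respectively $q$) count the zeros up to the first preceding $1$ or the start of the string (respectively the end), since the admissibility constraint only forbids long runs of $0$s regardless of whether they are bordered by another $1$ or by the string boundary. With this convention, the single-flip argument goes through verbatim and the proof is a short two-directional unwinding of the definitions combined with Theorem \ref{flip_theo}.
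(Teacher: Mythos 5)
Your proof is correct and follows essentially the same route as the paper: both reduce the statement to the flipping characterization of minimality (Theorem \ref{flip_theo}) and observe that flipping a $1$ surrounded by at least $k$ zeros creates a run of at least $k+1$ consecutive zeros that no further $1\to 0$ flips can shorten. Your write-up merely makes explicit the converse direction and the boundary conventions that the paper leaves implicit.
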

\begin{proof}
$1\rightarrow 0$ flipping of $1$ bit of the signal that is surrounded by at least $k$ number of $0$ bits is not admissible even for a combination of 
multiple $1\rightarrow 0$ flippings as it results in a signal with at least $k+1$ consecutive $0$s. The rest follows from Theorem \ref{flip_theo}.
\end{proof}
 \begin{figure}[ht]
\begin{center}
\includegraphics[scale=0.5]{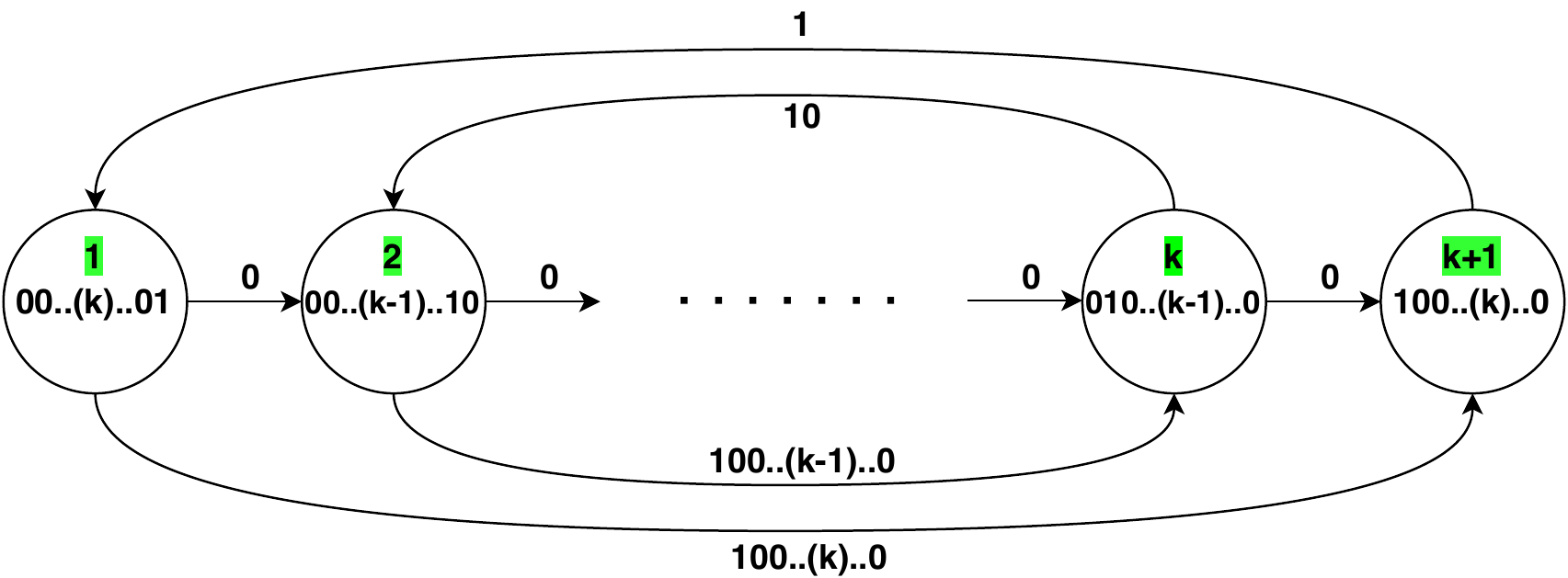}
\caption{Minimal Automaton for $k$ consecutive dropouts}
\label{fig:genminauto}
\end{center}
\end{figure}
The above result will now be used to construct an automaton (Fig. \ref{fig:genminauto}) whose paths are minimals signals for the constraint of at most $k$ 
consecutive dropouts. The automaton ensures that every $1$ bit is padded with enough $0$ bits such that it is surrounded by at least $k$ number of $0$ bits. 
Algorithm \ref{alg:minauto} generates paths of length $t$ of such an automaton in order to generate minimal signals of length $t$ for at most $k$ 
consecutive dropouts. It is basically a Breadth First Search algorithm searching for all signals of length $T$ on graph in Fig. \ref{fig:genminauto}. 
It maintains a queue of nodes to be evaluated, poping them one by one and pushing its neighboring nodes unless the length of the signal is equal to or 
exceeds $T$. All signals of length $T$ are stored and later used to evaluate performance of the optimal problems. The automaton in Fig. \ref{fig:genminauto} 
can be efficiently generated due to its symmetry; Every $i$th node, except the $(k+1)$th node, is connected to the $(i+1)$th node by a weight of $0$ and to 
the $(k+2-i)$th node by a weight of $1$ followed by $(k+1-i)$ zeros.

\begin{algorithm} 
 \caption{Minimal Signals using minimal automaton}
    \label{alg:minauto}
    \hspace*{\algorithmicindent} \textbf{Input : $k, T$}\\
    \hspace*{\algorithmicindent} \textbf{Output : Minimal signals $M_{\sigma(T)}$}
    \begin{algorithmic}[1]
    \Let{$M_{\sigma(T)}$}{$\emptyset$}
    \Let{$G(V,E)$}{k-minimal automaton (Fig. \ref{fig:genminauto})}
    \Let{$nodes$}{$Queue$}
    \State $nodes.push((1,\emptyset))$
    \While{$nodes$ is not empty}
    \Let{$(s,\sigma)$}{$nodes.pop$}
    \If{$|\sigma|=T$}
    \Let{$M_{\sigma(T)}$}{$M_{\sigma(T)}\cup\sigma$}
    \EndIf
    \If{$|\sigma|\geq T$}
    \Continue
    \EndIf
    \ForAll{ $e=(v,w,\hat{\sigma})\in E$ such that $v=s$}
    \State $nodes.push((w,\sigma\hat{\sigma}))$
    \EndFor
    \EndWhile
     \State \Return{$M_{\sigma(T)}$}
    \end{algorithmic}

\end{algorithm}
\begin{figure}[ht]
\begin{center}
\includegraphics[scale=0.5]{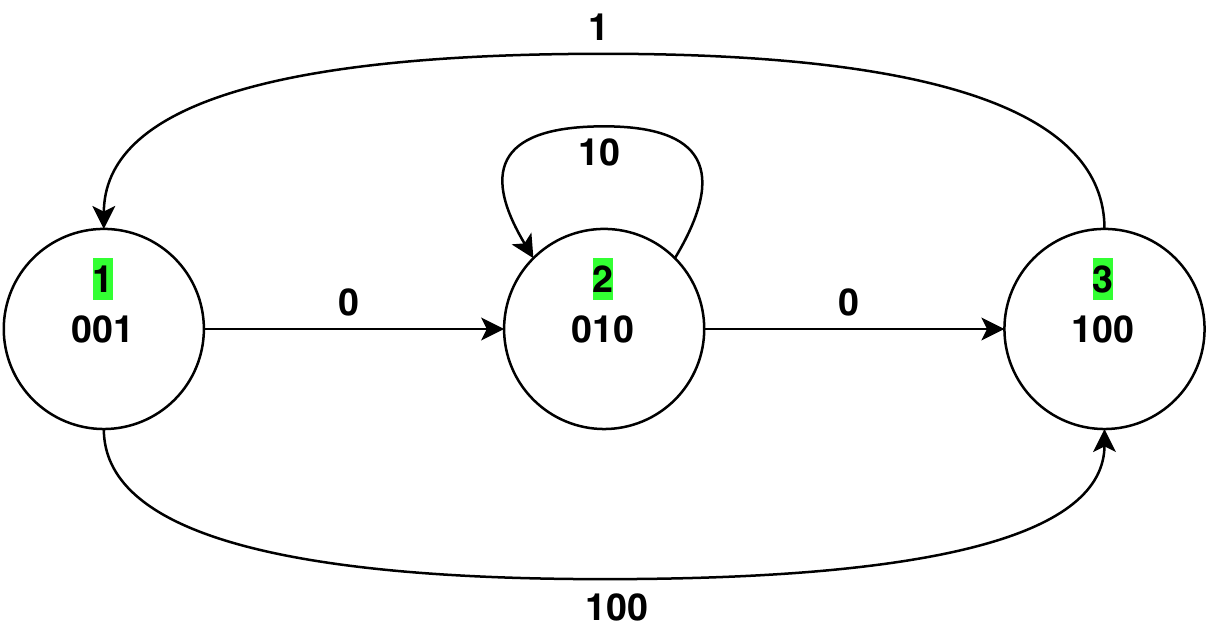}
\caption{Minimal Automaton for 2 consecutive dropouts}
\label{fig:2minauto}
\end{center}
\end{figure}

\begin{example}
Let's try to understand the working of minimal automaton by considering the example of at most $2$ consecutive dropouts (Fig. \ref{fig:2minauto}). 
Each state of the minimal automaton keeps track of the recent number of consecutive $0$ bits occurred in the switching signal. For at most $2$ consecutive 
dropouts, only $3$ states are possible: $1)$ No recent $0$ bits, $2)$ $1$ recent $0$ bit and $3)$ $2$ recent $0$ bits. For all except the $3$rd state both 
a $1$ and $0$ can occur next. However, for a signal in the $1$st state and the $2$nd state an occurrence of $1$ must be followed by two $0$ bits and one $0$ 
bit respectively in order to ensure minimality. For the $3$rd state, a $0$ is not allowed and only a $1$ is possible. Thus, any signal generated by the 
minimal automaton on any length is minimal and we are only required to generate signals of appropriate length which is precisely what Algorithm \ref{alg:minauto} does.
\end{example}
The following table (Table $\ref{tab:perf_analy}$) shows a comparison between the $T-$lift algorithm and the new algorithm in terms of 
the computation time in seconds to generate minimal signals of fixed length.
\begin{table}[h]
\centering
\caption{Computational Performance of algorithms}\label{tab:perf_analy}
\begin{tabularx}{0.85\linewidth}{|Y{1}|Y{1}|Y{1}|Y{1}|Y{1}|}
\hline
\multirow{3}{*}{Dropouts}&\multicolumn{4}{c|}{Computation Time(secs)}\\
\cline{2-5}
&\multicolumn{2}{c|}{Length 12}&\multicolumn{2}{c|}{Length 20}\\
\cline{2-5}
&Algo \ref{alg:minauto}&T-lift&Algo \ref{alg:minauto}&T-lift\\
\hline
1&0.0033&0.0051&0.0425&0.0931\\
\hline
2&0.0054&0.0089&0.1333&1.1134\\
\hline
3&0.0050&0.0123&0.1305&1.8862\\
\hline
\end{tabularx}
\end{table}
(The graph of time required to generate the minimal signals for a few fixed dropout constraints can be found in the supplementary material.)
\section{Validation Study}\label{study}
\begin{table}[h]
\centering
\caption{Results of Validation Study}\label{tab:results}
\begin{tabularx}{0.85\linewidth}{|Y{0.35}|Y{1.3}|Y{1.45}|Y{0.95}|Y{0.95}|}
\hline
Prob.&($k$, $n$, $m$, \#samples)&Avg. RPD&Avg. Time (Algo \ref{alg:minauto})&Avg. Time (T-lift)\\
\hline
\multirow{ 2}{*}{I}&(1,10,7,1200)&100\%&0.0019sec&0.0014sec\\
&(1,100,70,1200)&100\%&0.0237sec&0.0242sec\\
\hline
\multirow{ 2}{*}{II}&(1,10,7,765)&71\%&0.2011sec&0.2012sec\\
&(1,100,70,400)&35.83\%&0.5849sec&0.5855sec\\
\hline
\multirow{ 2}{*}{III}&(1,10,7,1200)&$4.2862\times 10^5$\%&0.0384sec&0.0405sec\\
&(1,100,70,1195)&9447.65\%&1.1651sec&1.1660sec\\
\hline
\multirow{ 2}{*}{V}&(1,10,7,1200)&$2.6892\times 10^7$\%&0.0360sec&0.0382sec\\
&(1,100,70,1200)&$1.1713\times 10^9$\%&0.5448sec&0.5459sec\\
\hline
\multirow{ 2}{*}{VI}&(1,10,7,1200)&$3.4899\times 10^{17}$\%&0.0018sec&0.0013sec\\
&(1,100,70,1200)&$2.4891\times 10^{22}$\%&0.0168sec&0.0151sec\\
\hline
\end{tabularx}
\end{table}
We now use Algorithm \ref{alg:minauto} and the $T-$lift algorithm used in \cite{packetdropsiccps} for studying the worst performance of optimally designed controllers 
and compare the performance of the two algorithms. 
Simulations performed to evaluate the worst possible impact of signal dropouts on optimal control problems described in Section \ref{pfm} using the 
proposed algorithms are given as follows. 
For each sample in the study, $A, B$ and $C$ matrices were generated randomly. The entries of $B$ and $C$ were 
derived from standard normal distribution. Samples of $A$ were obtained equally from the following $3$ methods: $(i)$ A diagonal matrix $D$ with random 
integers in $[-25,25]$ excluding $0$ was generated and scaled by $0.1$. A random orthogonal matrix $V$ was generated and a sample was generated by using  
$A=V^{\mathsf{T}}DV$; $(ii)$ Entries of $A$ were derived from the standard normal distribution; $(iii)$ Entries of $A$ were derived from standard the normal 
distribution and scaled by $10$. Non-singularity of $A$ was ensured in generating samples of $A$. Uncontrollable and unobservable systems were discarded. 
For control problems, $\mathbf{x_0}$ or $\mathbf{x_f}$ was chosen to be vectors of all ones. For each problem, two studies were performed: $(i)$ 
for systems with $10$ states and $7$ inputs/outputs; (ii) for systems with $100$ states and $70$ inputs/outputs. 
For each study, the computational performance (average running time) of Algorithm \ref{alg:minauto} and the T-lift algorithm of \cite{packetdropsiccps} and 
the relative performance degradation of each problem was evaluated and is provided in Table \ref{tab:results}. The relative performance degradation (RPD) 
due to packet dropouts is calculated as follows,
\begin{eqnarray}
\text{RPD}=\frac{\text{Worst Dropout Cost}-\text{No Dropout Cost}}{\text{No Dropout Cost}}.\nonumber
\end{eqnarray}
%

\section{Conclusion}
We considered time optimal control and estimation problems as well as energy/fuel optimization problems and LQR optimization 
problems for discrete linear systems subject to packet dropouts. 
We modeled the dropout signals using directed graphs and the dropout signals were binary in nature. Consequently, all the optimization problems were combinatorial. 
We tackled these problems algorithmically 
by exploiting the partial order relation imposed on the dropout signals. For a fixed dropout signal, the control problems 
reduce to SDP problems whereas; the estimation problem reduces to a feasibility problem. 
Therefore, the proposed optimization problems reduce to solving SDPs over the set of minimal signals. 
We gave a validation of the numerical experiments carried out using the proposed algorithms. We can use the solution of these optimization problems 
to quantify the reliability of the underlying communication network. This in turns allows us to compare different available communication networks and 
choose the optimal one based on the optimal control problem one wants to solve. 

In future, we want to consider probabilistic models for the dropout signals and uncertainties in system models and data measurements. Furthermore, we want to leverage model predictive control techniques 
to solve the optimization problems arising therein. 

 \bibliographystyle{ieeetr}   
\bibliography{dropouts}

\end{document}